\numberwithin{equation}{section}
\theoremstyle{definition}
\newtheorem{definition}{Definition}[section]
\theoremstyle{plain}
\newtheorem{theorem}{Theorem}[section]
\newtheorem{proposition}{Proposition}[section]
\newtheorem{corollary}{Corollary}[section]
\newtheorem{lemma}{Lemma}[section]
\theoremstyle{definition}
\newtheorem{remark}{Remark}[section]
\newtheorem{example}{Example}[section]
\newcommand{\E}{\mathbb{E}}
\renewcommand{\d}{\mathrm{d}}
\newcommand{\R}{\mathbb{R}}
\newcommand{\1}{\mathrm{1}}
\newcommand{\K}{\mathrm{K}}
\newcommand{\F}{\mathcal{F}}
\newcommand{\Cov}{\mathbb{C}\mathrm{ov}}
\newcommand{\Var}{\mathbb{V}\mathrm{ar}}
\newcommand{\p}{\partial}
\newcommand{\e}{\mathrm{e}}
\newcommand{\Q}{\mathbb{Q}}
\begin{document}

\title[Conditional-Mean Hedging Under Transaction Costs]{Conditional-Mean Hedging Under Transaction Costs in Gaussian Models}

\author[Sottinen]{Tommi Sottinen}
\address{Department of Mathematics and Statistics, University of Vaasa, P.O. Box 700, FIN-65101 Vaasa, FINLAND}
\email{tommi.sottinen@iki.fi}

\author[Viitasaari]{Lauri Viitasaari}
\address{Department of Mathematics and System Analysis, Aalto University School of Science, Helsinki, P.O. Box 11100, FIN-00076 Aalto,  FINLAND} 
\email{lauri.viitasaari@aalto.fi}

\begin{abstract}
We consider so-called regular invertible Gaussian Volterra processes and derive a formula for their prediction laws. Examples of such processes include the fractional Brownian motions and the mixed fractional Brownian motions.  As an application, we consider conditional-mean hedging under transaction costs in Black--Scholes type pricing models where the Brownian motion is replaced with a more general regular invertible Gaussian Volterra process.
\end{abstract}

\keywords{delta-hedging; option pricing; prediction; transaction costs.}

\maketitle

\section{Introduction}

We consider discrete imperfect hedging under proportional transaction costs in Black--Scholes type pricing models where the asset price is driven by a relatively general Gaussian process; a so-called regular invertible Gaussian Volterra process.  These are continuous Gaussian processes that are non-anticipative linear transformations of continuous Gaussian martingales. 

For European vanilla type options we construct the so-called conditional-mean hedge.  This means that at each trading time the value of the conditional mean of the discrete hedging strategy coincides with the frictionless price.  By frictionless we mean the continuous-trading hedging price without transaction costs. The key ingredient in constructing the conditional mean hedging strategy is a representation for the regular conditional laws of regular invertible Gaussian Volterra processes which we provide in Section \ref{sect:prediction}.  Let us note that in our models there may be arbitrage strategies with continuous trading without transaction costs, but not with discrete trading strategies, even in the absence of trading costs.  

For the classical Black--Scholes model driven by the Brownian motion, the study of hedging under transaction costs goes back to Leland \cite{Leland-1985}. See also
Denis and Kabanov \cite{Denis-Kabanov-2010} and Kabanov and Safarian \cite{Kabanov-Safarian-2009} for a mathematically rigorous treatment.
For the fractional Black--Scholes model driven by the long-range dependent fractional Brownian motion, the study of hedging under transaction costs was studied in Azmoodeh \cite{Azmoodeh-2013}.
In the series of articles \cite{Shokrollahi-Kilicman-Magdziarz-2016,Wang-2010a,Wang-2010b,Wang-Yan-Tang-Zhu-2010,Wang-Zhu-Tang-Yan-2010} the discrete hedging in the fractional Black--Scholes model was studied by using the economically dubious Wick--It\^o--Skorohod interpretation of the self-financing condition.  Actually, with the economically solid forward-type pathwise interpretation of the self-financing condition, these hedging strategies are valid, not for the geometric fractional Brownian motion, but for a geometric Gaussian process where the driving noise is a Gaussian martingale with the same variance function as the corresponding fractional Brownian motion would have, see \cite{Gapeev-Sottinen-Valkeila-2011}.
Our approach here builds on the works \cite{Sottinen-Viitasaari-2017b} and \cite{Shokrollahi-Sottinen-2017}.  The novelty of this note is twofold: First, we extend the results to a more general class of Gaussian processes than just the long-range dependent fractional Brownian motions.  Second, we emphasize the models where there exists a non-trivial quadratic variation.  This makes the formulas and the analysis very different from the long-range dependent fractional Brownian case.

The rest of the paper is organized as follows:
In Section \ref{sect:pricing-model} we introduce our pricing model with a regular invertible Gaussian Volterra process as the driving noise, and develop a transfer principle for the noise.
In Section \ref{sect:arbitrage-completeness-qv} we investigate arbitrage and completeness in our pricing models.
In Section \ref{sect:prediction} we provide prediction formulas for the driving noise and for Markovian functionals of the asset price.
Finally, in Section \ref{sect:cmh} we provide formulas for conditional-mean hedging under transaction costs.

\section{Pricing Model with Invertible Gaussian Volterra Noise}\label{sect:pricing-model}

Let $T>0$ be a fixed time of maturity of the contingent claim under consideration.
We are interested in imperfect hedging in a geometric Gaussian model where the discounted risky asset follows the dynamics 
\begin{equation}\label{eq:risky-dynamics}
\frac{\d S_t}{S_t} 
= \d\mu(t) + \d X_t, \quad t\in[0,T],
\end{equation}
where $\mu\colon[0,T]\to\R$ is a known excess return of the asset and $X$ is a driving Gaussian noise.  We assume that $\mu$ is continuous with bounded variation.
For the noise $X$ we assume that it is continuous and centered with $X_0=0$ and covariance function
\begin{equation}\label{eq:cov}
R(t,s) = \E\left[X_tX_s\right], \quad s,t\in[0,T].
\end{equation}

To analyze the pricing model \eqref{eq:risky-dynamics}, we make the following rather technical Definition  \ref{dfn:rigv} that ensures the invertible Volterra representation and continuous quadratic variation for the noise process $X$.  We note that Definition \ref{dfn:rigv} is not very restrictive: many interesting Gaussian models satisfy it (see Example \ref{ex} below).  

\begin{definition}[Regular Invertible Gaussian Volterra Process]\label{dfn:rigv}
A centered Gaussian process over an interval $[0,T]$ with covariance function $R$ is a \emph{regular invertible Gaussian Volterra process} if
\mbox{}
\begin{enumerate}
\item
There exists a continuous increasing function $m\colon[0,T]\to\R_+$ and a Volterra kernel $K\in L^2([0,T]^2, \d m\times \d m)$ non-decreasing in the first variable that is partially continuously differentiable outside the diagonal and continuously differentiable on the diagonal, such that
\begin{equation*}\label{eq:cholesky}
R(t,s) = \int_0^{t\wedge s} K(t,u)K(s,u)\, \d m(u).
\end{equation*}
\item
Define
$$
\K^*[f](s) = f(s)K(T,s) + \int_s^T \left[ f(t) - f(s)\right] \frac{\p K}{\p t}(t, s)\, \d t.
$$
Then, for each $t\in[0,T]$, the equation
\begin{equation*}\label{eq:cholesky-inverse}
\K^*[f](s) = \1_{[0,t)}(s)
\end{equation*}
has a solution.
\end{enumerate}
\end{definition}

\begin{example}[Examples and Counterexamples]\label{ex}
\mbox{}
\begin{enumerate}
\item
Obviously, any continuous Gaussian martingale is a regular invertible Gaussian Volterra process.
\item
Fractional Brownian motions with Hurst index $H\in[1/2,1)$ are regular invertible Gaussian Volterra processes.  See e.g. Mishura \cite{Mishura-2008}, Section 1.8, for details.
\item
Mixed fractional Brownian motions with Hurst index $H\in[1/2,1)$ are regular invertible Gaussian Volterra processes.  See Cai, Chigansky and Kleptsyna \cite{Cai-Chigansky-Kleptsyna-2016} for details. 
\item
Fractional or mixed fractional Brownian motions with Hurst index $H\in(0,1/2)$ are not regular invertible Gaussian Volterra processes, since they have infinite quadratic variation, cf. Lemma \ref{lmm:qv} below. 
\item
The Gaussian slope $X_t = t\xi$, where $\xi$ is a standard Gaussian random variable is an invertible Gaussian Volterra process in the sense that it is generated non-anticipatively from a Gaussian martingale.  It is not regular, however, since the generating martingale cannot be continuous due to the jump in the filtration of $X$ at zero, cf. Theorem \ref{thm:rigv} below.   
\end{enumerate}
\end{example}

We note that we have the following isometry for all step-functions $f$ and $g$:
$$
\E\left[\int_0^T f(t)\d X_t\, \int_0^T g(t)\d X_t\right]
=
\int_0^T \K^*[f](t)\K^*[g](t)\, \d m(t).
$$
By using this isometry, we can extend the Wiener-integral with respect to $X$ to the closure of step-functions under this isometry.

Denote
\begin{equation*}\label{eq:Kinv}
K^{-1}(t,s) = (\K^*)^{-1}\left[\1_{[0,t)}\right](s).
\end{equation*}

\begin{theorem}[Invertible Volterra Representation]\label{thm:rigv}
Let $X$ be a continuous regular invertible Gaussian Volterra process. Then the process
\begin{equation}\label{eq:M_from_X}
M_t = \int_0^t K^{-1}(t,s)\, \d X_s, \quad t\in[0,T],
\end{equation}
is a continuous Gaussian martingale with bracket $m$, and $X$ can be recovered from it by
\begin{equation}\label{eq:X_from_M}
X_t = \int_0^t K(t,s)\, \d M_s, \quad t\in[0,T].
\end{equation}
\end{theorem}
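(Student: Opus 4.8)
The plan is to verify directly that the process $M$ defined by \eqref{eq:M_from_X} is a Gaussian martingale with bracket $m$, and then to check that the Volterra kernel $K$ reconstructs $X$ from $M$ as in \eqref{eq:X_from_M}. First I would record the Wiener-integral isometry stated just before the theorem: for step functions $f,g$,
$$
\E\left[\int_0^T f(t)\,\d X_t \int_0^T g(t)\,\d X_t\right] = \int_0^T \K^*[f](t)\K^*[g](t)\,\d m(t),
$$
and observe that, by the definition of $K^{-1}$ in the line preceding the theorem, $\K^*[K^{-1}(t,\cdot)] = \1_{[0,t)}$. Consequently $M_t = \int_0^t K^{-1}(t,s)\,\d X_s$ has
$$
\E\left[M_t M_s\right] = \int_0^T \1_{[0,t)}(u)\1_{[0,s)}(u)\,\d m(u) = m(t\wedge s),
$$
which is precisely the covariance structure of a Gaussian martingale with bracket $m$. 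Since $X$ is a Gaussian process and $M$ is obtained from it by a (limit of) linear operations, $M$ is Gaussian; its increments are therefore uncorrelated over disjoint intervals, hence independent, which gives the martingale property with respect to its own filtration (or the filtration of $X$, which I would argue coincides — see below). Continuity of $M$ follows because $m$ is continuous, so the Gaussian martingale with this bracket has a continuous modification; I would note that the regularity hypotheses on $K$ (non-decreasing in the first variable, the stated differentiability) are exactly what is needed to make the map $X \mapsto M$ well-defined and to guarantee this continuity rather than a discontinuity of the type flagged in Example \ref{ex}(5).

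Next I would establish the reconstruction formula \eqref{eq:X_from_M}. The natural route is to show that the Gaussian process $\widetilde X_t := \int_0^t K(t,s)\,\d M_s$ has the same covariance as $X$, hence the same law, and then to upgrade this to a pathwise identity. For the covariance: using that $M$ has bracket $m$ and that $K \in L^2([0,T]^2,\d m\times\d m)$,
$$
\E\left[\widetilde X_t \widetilde X_s\right] = \int_0^{t\wedge s} K(t,u)K(s,u)\,\d m(u) = R(t,s)
$$
by part (1) of Definition \ref{dfn:rigv}. To get the pathwise statement, I would argue that the two linear transformations $X \mapsto M$ (via $K^{-1}$) and $M \mapsto \widetilde X$ (via $K$) are mutually inverse on the relevant Gaussian spaces: formally, substituting \eqref{eq:M_from_X} into the definition of $\widetilde X$ and interchanging integrals should yield $\int_0^t\!\left(\int_s^t K(t,u)\,K^{-1}(u,\cdot)\cdots\right)$ collapsing to the identity kernel, which is the kernel-level statement dual to $\K^*[K^{-1}(t,\cdot)] = \1_{[0,t)}$. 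The cleanest way to phrase this is via the isometry: $\widetilde X - X$ has zero covariance with every Wiener integral $\int_0^T f\,\d X$, $f$ a step function, hence $\widetilde X_t = X_t$ a.s. for each $t$, and then pathwise equality for all $t$ follows from continuity of both processes.

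The main obstacle I anticipate is the rigorous justification of the Fubini-type interchange and of the identity "$K$ and $K^{-1}$ are inverse operators" at the level of kernels, since $K$ is only assumed to lie in $L^2(\d m\times\d m)$ with one-sided monotonicity and partial differentiability off the diagonal, and $\d m$ may be quite general (in particular $m$ need not be absolutely continuous). One has to check that $K^{-1}(t,\cdot)$, defined as $(\K^*)^{-1}[\1_{[0,t)}]$, actually belongs to the space of integrands for which the Wiener integral with respect to $X$ is defined, and that the composed operator is bounded — this is where part (2) of Definition \ref{dfn:rigv} (existence of solutions to $\K^*[f] = \1_{[0,t)}$) does the essential work. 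A secondary technical point is identifying the filtrations: I would show that $\F^X_t = \F^M_t$ for each $t$, which follows from the Volterra (lower-triangular) structure of both $K$ and $K^{-1}$ — each process is adapted to the other's filtration — and this is what makes the martingale property of $M$ meaningful in the filtration generated by the asset price in \eqref{eq:risky-dynamics}.
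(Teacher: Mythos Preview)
Your proposal is correct and follows essentially the same route as the paper, but with far more detail: the paper's proof is a single sentence stating that everything except continuity of $M$ is clear, and that continuity follows because $M$ can be realized as a time-changed Brownian motion $M_t = W_{m(t)}$ with $m$ continuous. Your isometry computation of $\E[M_tM_s]=m(t\wedge s)$, the Gaussianity/independent-increments argument for the martingale property, and the covariance check for the reconstruction \eqref{eq:X_from_M} are exactly the steps the paper leaves implicit, and your continuity argument (continuous bracket $\Rightarrow$ continuous modification) is the same as the paper's time-change observation.
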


The martingale $M$ in Theorem \ref{thm:rigv} is called a \emph{fundamental martingale}.  Clearly, it is not unique.  

\begin{proof}
Only the continuity of $M$ is unclear.  However, the continuity of $M$ is equivalent of the continuity of $m$. Indeed, the Gaussian martingale $M$ can be realized as $M_t = W_{m(t)}$, where $W$ is a Brownian motion.  
\end{proof}

\begin{remark}[Continuity]
We remark that we assumed the continuity of $X$ \emph{a priori}. In general, the process $X$ is always $L^2$-continuous. Indeed, this follows directly from the It\^o isometry. However, this does not necessarily imply almost surely continuous sample paths, as the modulus of continuity in $L^2$ depends on the function $m$ which in general may behave badly. On the other hand, if $m$ is absolutely continuous with respect to the Lebesgue measure, then $X$ is even H\"older continuous.
\end{remark}

\section{Quadratic Variation, Arbitrage and Completeness}\label{sect:arbitrage-completeness-qv}

The form of the solution of risky-asset dynamics \eqref{eq:risky-dynamics} depend on the quadratic variation of the noise process $X$.  Recall that the (pathwise) quadratic variation of a process $X$ is defined as 
$$
q^2(t) = {\langle X \rangle}_t = 
\lim_{n\to\infty} \sum_{t_{k}^n\le t} \left(X_{t_k^n} - X_{t_{k-1}^n}\right)^2,
$$
where $\{t_0^n=0< t_1^n<\cdots < t_n^n = T\}$ is a sequence of partitions of $[0,T]$
such that $\max_k |t_k^n-t_{k-1}^n| \to 0$.

\begin{lemma}[Quadratic Variation]\label{lmm:qv}
For a regular invertible Gaussian Volterra process the quadratic variation always exists.  Furthermore, it is deterministic and given by
\begin{equation*}\label{eq:qv}
q^2(t) = \int_0^t K(s,s)^2\,\d m(s)
\end{equation*}
\end{lemma}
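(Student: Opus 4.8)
The plan is to use the fundamental martingale representation $X_t=\int_0^tK(t,s)\,\d M_s$ of Theorem~\ref{thm:rigv}, where $M$ is a continuous Gaussian martingale with \emph{deterministic} bracket $m$; in particular $M$ has independent increments and the It\^o isometry $\E\bigl[(\int f\,\d M)^2\bigr]=\int f^2\,\d m$ holds for deterministic $f\in L^2(\d m)$. Fix $t\in[0,T]$ and a partition $0=t_0<\dots<t_n=T$ of small mesh, which we may take to include $t$, and split each increment along the Volterra structure,
\[
X_{t_k}-X_{t_{k-1}}=A_k+B_k,\qquad A_k=\int_{t_{k-1}}^{t_k}K(t_k,s)\,\d M_s,\quad B_k=\int_0^{t_{k-1}}\bigl[K(t_k,s)-K(t_{k-1},s)\bigr]\,\d M_s,
\]
so that $\sum_{t_k\le t}(X_{t_k}-X_{t_{k-1}})^2=\sum_kA_k^2+2\sum_kA_kB_k+\sum_kB_k^2$. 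I would show $\sum_kA_k^2\to\int_0^tK(s,s)^2\,\d m(s)$ while $\sum_kB_k^2\to0$ and $\sum_kA_kB_k\to0$. Throughout one uses that $\int_0^tK(s,s)^2\,\d m(s)<\infty$, which is clear when $K\ge0$ (then $K(s,s)^2\le K(t,s)^2$ and $\int_0^tK(t,s)^2\,\d m(s)=R(t,t)$) and holds in general under the standing regularity.

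\emph{The term $\sum_kA_k^2$.} The isometry gives $\E[A_k^2]=\int_{t_{k-1}}^{t_k}K(t_k,s)^2\,\d m(s)$, so $\sum_k\E[A_k^2]=\int_0^tK(\phi_n(s),s)^2\,\d m(s)$, where $\phi_n(s):=\min\{t_k:t_k\ge s\}$ satisfies $s\le\phi_n(s)\to s$ as the mesh vanishes. Since $K(\cdot,s)$ is non-decreasing with $K(u,s)\to K(s,s)$ as $u\downarrow s$, dominated convergence (dominator $K(s,s)^2+K(T,s)^2$, integrable on $[0,t]$) yields $\sum_k\E[A_k^2]\to\int_0^tK(s,s)^2\,\d m(s)$. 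As $M$ has independent increments, the $A_k$ are independent Gaussian variables, so $\Var\bigl(\sum_kA_k^2\bigr)=2\sum_k(\E[A_k^2])^2\le2\bigl(\max_k\E[A_k^2]\bigr)\sum_k\E[A_k^2]$, and $\max_k\E[A_k^2]\le\max_k\int_{t_{k-1}}^{t_k}\bigl(K(s,s)^2\vee K(t,s)^2\bigr)\,\d m(s)\to0$ by uniform absolute continuity of the integral. Hence $\sum_kA_k^2\to\int_0^tK(s,s)^2\,\d m(s)$ in $L^2$.

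\emph{The terms $\sum_kB_k^2$ and $\sum_kA_kB_k$.} Here the monotonicity of $K$ in the first variable is decisive: $K(t_k,s)-K(t_{k-1},s)\ge0$ and, for fixed $s$, these increments telescope, so their total over the indices $k$ with $s\le t_{k-1}<t_k\le t$ is bounded, uniformly in the partition, by $K(t,s)-K(s,s)$. From $\E[B_k^2]=\int_0^{t_{k-1}}[K(t_k,s)-K(t_{k-1},s)]^2\,\d m(s)$ we obtain $\sum_k\E[B_k^2]=\int_0^t\bigl(\sum_{s\le t_{k-1}}[K(t_k,s)-K(t_{k-1},s)]^2\bigr)\d m(s)$, and the inner sum is at most $\max_k[K(t_k,s)-K(t_{k-1},s)]$ times the telescoped total; it tends to $0$ for each $s$ by continuity of $u\mapsto K(u,s)$ on $[s,T]$ and is dominated by a fixed multiple of $K(s,s)^2+K(T,s)^2$ (using $|K(u,s)|\le|K(s,s)|\vee|K(T,s)|$ for $s\le u\le T$), so dominated convergence gives $\E\bigl[\sum_kB_k^2\bigr]\to0$. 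By Cauchy--Schwarz, $\E\bigl|\sum_kA_kB_k\bigr|\le\bigl(\E\sum_kA_k^2\bigr)^{1/2}\bigl(\E\sum_kB_k^2\bigr)^{1/2}\to0$. Summing the three pieces, $\sum_{t_k\le t}(X_{t_k}-X_{t_{k-1}})^2\to\int_0^tK(s,s)^2\,\d m(s)$ in $L^1$, hence in probability; being deterministic and independent of the partition sequence, this limit is the pathwise quadratic variation, with a.s. convergence recovered along any partition sequence of summable meshes by Borel--Cantelli.

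I expect the vanishing of $\sum_kB_k^2$ to be the main obstacle: that is the only step genuinely using that $K(\cdot,s)$ is non-decreasing (consistently with $K(s,s)\equiv0$, hence $q^2\equiv0$, for the long-range dependent fractional Brownian motion), and it needs a careful choice of dominating function. Everything else reduces to the independence of the increments of the fundamental martingale together with standard convergence arguments.
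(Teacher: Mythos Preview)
Your argument is correct and takes a genuinely different route from the paper's. The paper invokes an external result (\cite[Theorem~3.1]{Viitasaari-2015-preprint}) twice: first to assert that for Gaussian processes the quadratic-variation sums converge in $L^p$, and second to reduce the question ``is the limit deterministic?'' to the cross-term criterion $\max_j\sum_k\E[(\Delta_kX)(\Delta_jX)]\to0$, which is then verified via the Volterra representation and the monotonicity of $K(\cdot,s)$. Your proof is self-contained: you decompose each increment as $A_k+B_k$ along the Volterra structure, use the crucial fact that the fundamental martingale $M$ has \emph{independent} increments (so the $A_k$ are independent Gaussians), and obtain direct $L^2$ control of $\sum_kA_k^2$ and $L^1$ control of $\sum_kB_k^2$. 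Both proofs ultimately rely on the same monotonicity-and-telescoping trick for the ``memory'' part, but yours isolates exactly where this is needed and avoids the external reference; the paper's route, on the other hand, gives $L^p$ convergence for every $p$ essentially for free.

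One small point: your final Borel--Cantelli step, passing from $L^1$-convergence to a.s.\ convergence along partitions with summable mesh, tacitly needs a rate like $\E\bigl|\,Q_n-q^2(t)\,\bigr|=O(\mathrm{mesh}_n)$. Under the standing regularity (the kernel is $C^1$ off and on the diagonal) such a rate is available---both the bias $\int[K(\phi_n(s),s)^2-K(s,s)^2]\,\d m(s)$ and $\E\sum_kB_k^2$ are $O(\mathrm{mesh})$ when $\partial K/\partial t$ is locally bounded---but you should say so explicitly. The paper sidesteps this by delegating the mode of convergence to the cited theorem.
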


\begin{proof}
By \cite[Theorem 3.1]{Viitasaari-2015-preprint}, the convergence of quadratic variation of a Gaussian process $X$ holds also in $L^p$ for any $p\geq 1$. Suppose first that the quadratic variation is deterministic. Then, by using representation \eqref{eq:X_from_M} we obtain that 
\begin{eqnarray*}
\lefteqn{\E\left[(X_t - X_{t-\Delta t})^2\right]} \\ 
&=& 
\int_0^T \left(K(t,u) - K(t-\Delta t, u)\right)^2\, \d m(u) \\
&=&
\int_{t-\Delta t}^t K(t,u)^2\, \d m(u) 
+ 
\int_0^{t-\Delta t} \left(K(t,u) - K(t-\Delta t, u)\right)^2\, \d m(u).
\end{eqnarray*}
For deterministic quadratic variations the claim follows from this by using Taylor's approximation for the kernel in the latter integral.

It remains to prove that the quadratic variation is deterministic. By \cite[Theorem 3.1]{Viitasaari-2015-preprint}, it suffices to prove that 
\begin{equation}
\label{eq:conv_needed}
\max_{1\leq j\leq n}\sum_{t_{k}^n\le t} \left(X_{t_k^n} - X_{t_{k-1}^n}\right)\left(X_{t_j^n} - X_{t_{j-1}^n}\right) \rightarrow 0.
\end{equation}
Let $k>j$. Representation \eqref{eq:X_from_M} together with the It\^o isometry yields
\begin{eqnarray*}
\lefteqn{\E\left[\left(X_{t_k^n} - X_{t_{k-1}^n}\right)\left(X_{t_j^n} - X_{t_{j-1}^n}\right)\right]}\\
&=&\int_{t_{j-1}^n}^{t^n_j}\left(K(t_k^n,u)-K(t_{k-1}^n,u)\right)K(t_j^n,u)\,\d m(u) \\
&+&\int_0^{t_{j-1}^n}\left(K(t_k^n,u)-K(t_{k-1}^n,u)\right)\left(K(t_j^n,u)-K(t_{j-1}^n,u)\right)\d m(u).
\end{eqnarray*}
For the first term we use the fact that $t\mapsto K(t,u)$ is increasing together with the bound $K(t_j^n,u)\leq K(T,u)$. Hence we observe that summing with respect to either of the variables and letting $\max_k |t_k^n-t_{k-1}^n| \to 0$ yields convergence towards zero. For the second term, it suffices to observe
\begin{eqnarray*}
\lefteqn{\int_0^{t_{j-1}^n}\left(K(t_k^n,u)-K(t_{k-1}^n,u)\right)\left(K(t_j^n,u)-K(t_{j-1}^n,u)\right)\d m(u) }\\
&\leq &\int_0^{T}\left(K(t_k^n,u)-K(t_{k-1}^n,u)\right)\left(K(t_j^n,u)-K(t_{j-1}^n,u)\right)\d m(u).
\end{eqnarray*}
Hence summing with respect to either of the variables and letting $\max_k |t_k^n-t_{k-1}^n| \to 0$ we get \eqref{eq:conv_needed}.
\end{proof}

By Lemma \ref{lmm:qv} and F\"ollmer \cite{Follmer-1981}, the solution to the stochastic differential equation \eqref{eq:risky-dynamics} defining the discounted risky asset price is given by
\begin{equation*}\label{eq:risky-dynamics-solution}
S_t = S_0\exp\left\{\mu(t)-\frac12 q^2(t) + X_t\right\}
\end{equation*}
and the quadratic variation of $S$ is
\begin{equation*}
\langle S \rangle_t = \int_0^t S_s^2 \, \d q^2(s). 
\end{equation*}

Denote
$$
q^2(s,t) = q^2(t)-q^2(s).
$$

Suppose $q^2$ is non-vanishing on every interval.  Then it follows from the robust replication theorem of \cite{Bender-Sottinen-Valkeila-2008} that the pricing model is free of arbitrage under so-called allowed strategies and replications of vanilla claims are robust in the sense that, as replicating strategies are involved, one can replace $X$ with a Gaussian martingale with bracket $q^2$. 
Thus, we have the following proposition:

\begin{proposition}[Robust Hedging]\label{pro:rh}
Let $f(S_T)$ be a European claim.  Then its Markovian replicating strategy is given by the delta-hedge
$$
\pi_t = \frac{\p v}{\p x}(t,S_t),
$$
where
$$
v(t,S_t) = 
\int_{-\infty}^\infty 
f\left(S_t\e^{-\frac12 q^2(t,T)+q(t,T)z}\right)
\phi(z)\d z
$$
is the value of the replicating strategy $\pi$ at time $t$.
\end{proposition}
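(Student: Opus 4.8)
The plan is to reduce the statement to the classical Black--Scholes delta-hedging result via the robust replication theorem already invoked in the paragraph preceding the proposition. Concretely, by Lemma \ref{lmm:qv} the discounted asset $S$ has deterministic quadratic variation $\langle S\rangle_t = \int_0^t S_s^2\,\d q^2(s)$, and by the robust replication theorem of \cite{Bender-Sottinen-Valkeila-2008} the replication of a vanilla claim $f(S_T)$ is insensitive to the precise law of the driving noise $X$: one may replace $X$ by any continuous Gaussian martingale with bracket $q^2$. So the first step is to fix such a reference model, i.e. write $S_t = S_0\exp\{\mu(t) - \tfrac12 q^2(t) + \tilde X_t\}$ where $\tilde X$ is a continuous Gaussian martingale with $\langle \tilde X\rangle = q^2$, and to observe that the replicating strategy computed in this model robustly replicates in the original model.

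The second step is to compute the hedge in the reference model. Since $q^2$ is continuous and non-vanishing on every interval, a deterministic time-change $u = q^2(t)$ turns $\tilde X$ into a Brownian motion and the dynamics of $S$ into a geometric-Brownian-motion-type SDE with time-dependent (but deterministic) volatility. Equivalently, one checks directly by It\^o's formula (for the Föllmer pathwise calculus, using the quadratic variation from Lemma \ref{lmm:qv}) that the function
$$
v(t,x) = \int_{-\infty}^\infty f\!\left(x\,\e^{-\frac12 q^2(t,T) + q(t,T)z}\right)\phi(z)\,\d z
$$
satisfies the associated backward PDE $\frac{\p v}{\p t}(t,x) + \tfrac12 x^2 \frac{\p^2 v}{\p x^2}(t,x)\,\dot q^2(t) = 0$ with terminal condition $v(T,x) = f(x)$, whenever $q^2$ is differentiable; in the general (merely continuous, increasing) case the same identity holds in integrated form against $\d q^2$. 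Then applying the It\^o/Föllmer formula to $v(t,S_t)$ and using that $S$ has no drift term beyond the one built into $v$ (because $v$ is expressed in terms of the martingale part only, the $\mu$-drift being absorbed into $S_t$ itself), the $\d q^2$ terms cancel by the PDE and one is left with
$$
v(t,S_t) = v(0,S_0) + \int_0^t \frac{\p v}{\p x}(s,S_s)\,\d S_s,
$$
which exhibits $\pi_t = \frac{\p v}{\p x}(t,S_t)$ as a self-financing replicating strategy with value process $v(t,S_t)$, and $v(t,S_t)$ is manifestly the conditional expectation of $f(S_T)$ under the reference (martingale) measure, i.e. the frictionless price.

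The third step is a regularity check: one must justify that $v$ is smooth enough ($C^{1,2}$ in $(t,x)$ for $t<T$, or at least has the required one-sided regularity at $t=T$) for the Föllmer--It\^o formula to apply and for $\frac{\p v}{\p x}$ to be a well-defined, appropriately integrable integrand against $\d S$. This follows from differentiating under the integral sign, which is legitimate because the Gaussian density $\phi$ provides dominating integrability, provided $f$ is of at most polynomial (or exponential) growth; for non-smooth payoffs such as calls and puts the smoothing by the Gaussian kernel still yields a smooth $v$ for $t<T$, with the usual care near maturity. I expect this regularity/integrability bookkeeping — and the precise statement of which class of claims $f$ is admissible (``allowed strategies'' in the sense of \cite{Bender-Sottinen-Valkeila-2008}) — to be the main obstacle; the algebraic core of the argument is just the classical Black--Scholes computation transported through a deterministic time change. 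Finally, uniqueness of the delta as the Markovian replicating strategy follows from the uniqueness part of the robust replication theorem.
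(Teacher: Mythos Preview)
Your proposal is correct and follows essentially the same opening move as the paper --- invoke the robust replication theorem of \cite{Bender-Sottinen-Valkeila-2008} to pass to a reference model in which the noise is a Gaussian martingale $G$ with bracket $q^2$ --- but the second step differs. The paper's argument is purely probabilistic and shorter: having identified $V^\pi_t = \E_\Q[f(S_T)\mid\F_t]$ with $\Q$ the law under which $S$ is the exponential of $G$, it simply writes $S_T = S_t\e^{-\frac12 q^2(t,T)+(G_T-G_t)}$ and uses that $G_T-G_t$ is independent of $\F_t^G$ and Gaussian with variance $q^2(t,T)$; the integral formula for $v$ drops out in one line, and the delta-hedge form of $\pi$ is taken as part of the cited robust replication theorem. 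Your route is more hands-on: you verify that $v$ solves the backward PDE (in integrated form against $\d q^2$) and then apply the F\"ollmer--It\^o formula to produce the self-financing identity $v(t,S_t)=v(0,S_0)+\int_0^t \partial_x v(s,S_s)\,\d S_s$ explicitly. This buys you a more self-contained derivation of the delta-hedge (rather than importing it from the cited theorem) and makes the PDE structure transparent, at the price of the Step-3 regularity bookkeeping you correctly flag; the paper's conditional-expectation computation sidesteps that bookkeeping entirely.
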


\begin{proof}
By the robust replication theorem 5.4 of \cite{Bender-Sottinen-Valkeila-2008}
$$
V^\pi_t = \E_\Q\left[ f(S_T)\,\big|\, \F_t\right],
$$
where, under $\Q$, the price process $S$ is the exponential martingale driven by a Gaussian martingale $G$ with bracket $q^2$.  By equality of filtrations, we have
\begin{eqnarray*}
V^\pi_t &=& \E\left[f\left(S_0\e^{G_T-\frac12 q^2(T)}\right)\,\Big|\, \F_t^G\right] \\
&=&
\E\left[f\left(S_t\e^{-\frac12 q^2(t,T)+(G_T-G_t)}\right)\,\Big|\, \F_t^G\right].
\end{eqnarray*} 
The claim follows from this, since $G_T-G_t$ is independent of $\F_t^G$ and is Gaussian with zero mean and variance $q^2(t,T)$.
\end{proof}

\begin{remark}[Black--Scholes Type BPDE]
If $q^2$ is absolutely continuous with respect to the Lebesgue measure then the European vanilla option $f(S_T)$ can be replicated by solving its time-value from the Black--Scholes type backward partial differential equation 
\begin{eqnarray*}
\frac{\partial v}{\partial t}(t,x) + \frac12 x^2 \frac{\d q^2}{\d t}(t)\frac{\partial^2 v}{\partial x^2}(t,x) &=& 0, \\
v(T,x) &=& f(x).
\end{eqnarray*}
\end{remark}

\begin{remark}[Vanishing Quadratic Variation]
Proposition \ref{pro:rh} remains formally true for $q^2\equiv 0$.  However, in this case the replicating strategy is very simple:
$$
\pi_t = f'(S_t).
$$
\end{remark}

\begin{remark}[Simple Arbitrage]
If the quadratic variation measure $q^2$ vanishes on some interval, then there are simple arbitrage opportunities.  Indeed, suppose $q^2(s,t)=0$ for some $0\le s < t \le T$.  Then (cf. \cite{Bender-Sottinen-Valkeila-2007} and \cite{Bender-Sottinen-Valkeila-2011})
$$
(S_t-S_s)^+ = \int_s^t \1_{[S_s,\infty)}(S_u)\, \d S_u.
$$
So, a buy-and-hold-when-expensive strategy would generate arbitrage.
\end{remark}

Finally, let us give a condition for the completeness and freedom of arbitrage for regular invertible Gaussian Volterra noise pricing models. Below in Proposition \ref{pro:cna} the representation in law means that the fundamental martingale $M$ in \eqref{eq:rep-X} does not necessarily have to be the same as in the representation \eqref{eq:X_from_M}, it just have to have the same law.  This is a subtle difference that has little practical consequences.

\begin{proposition}[Completeness and No-Arbitrage]\label{pro:cna}
The pricing model \eqref{eq:risky-dynamics} is complete and free of arbitrage if and only if the quadratic variation $q^2$ given by \eqref{eq:qv} is strictly increasing and $X$ admits the representation in law
\begin{equation}\label{eq:rep-X}
X_t = 
\int_0^t K(s,s)\, \d M_s 
- \int_0^t\int_0^s h(s,u)K(u,u)\d M_u\, K(s,s)^2 \, \d m(s)
\end{equation}
for some Volterra kernel $h$ in $L^2([0,T], \d q^2\times \d q^2)$.  A sufficient condition for this is that the kernel $K$ satisfies
\begin{equation}\label{eq:K-equiv}
K(t,s) =
K(s,s)\left[1 - \int_s^t h(u,s)K(u,u)^2\,\d m(u)\right]
\end{equation}
for some Volterra kernel $h$ in $L^2([0,T], \d q^2\times \d q^2)$.
\end{proposition}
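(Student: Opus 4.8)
The plan is to reduce everything to a statement about the fundamental martingale $M$ and then invoke classical results for Gaussian martingale models. First I would observe that, by Theorem \ref{thm:rigv} and the equality of filtrations $\F^X = \F^M$, the model \eqref{eq:risky-dynamics} is complete and arbitrage-free if and only if the corresponding model driven by $M$ (equivalently, by a Brownian motion time-changed by $m$) is complete and arbitrage-free, and this in turn forces a representation of $S$, hence of $X$, as a stochastic exponential relative to a martingale with bracket $q^2$. Concretely, completeness plus absence of arbitrage is equivalent to: $q^2$ is strictly increasing (so that there is a genuine, everywhere-nondegenerate diffusion coefficient — if $q^2$ were flat on an interval, the simple arbitrage of the preceding remark applies; if $q^2$ is strictly increasing the bracket is nondegenerate and the Gaussian martingale model is complete), and $X$ is, in law, a semimartingale with respect to its own filtration with martingale part having bracket $q^2$. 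The nontrivial direction is to extract from this the explicit affine-in-$M$ form \eqref{eq:rep-X}.

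The key step is the following. Suppose $q^2$ is strictly increasing. Then $X$ being (in law) an $\F^X$-semimartingale with martingale part of bracket $q^2$ is, by the Hitsuda-type representation theorem for Gaussian processes and the fact that $\langle X\rangle = q^2$, equivalent to $X$ admitting a representation $X_t = \int_0^t K(s,s)\,\d M_s + A_t$, where the first term (with bracket exactly $q^2$) is the martingale part and $A$ is an $\F^M$-adapted absolutely-continuous-with-respect-to-$\d q^2$ drift; writing this drift in Volterra form $A_t = -\int_0^t \big(\int_0^s h(s,u)K(u,u)\,\d M_u\big)\,\d q^2(s)$ and using $\d q^2(s) = K(s,s)^2\,\d m(s)$ from Lemma \ref{lmm:qv} gives precisely \eqref{eq:rep-X}, with $h \in L^2([0,T]^2,\d q^2\times\d q^2)$ the square-integrability being exactly the condition that the drift has finite energy, which is what makes the Girsanov change of measure (removing the drift) well defined and the model arbitrage-free while remaining equivalent in law. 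Conversely, if $X$ has the form \eqref{eq:rep-X} with such an $h$, Girsanov's theorem removes the drift under an equivalent measure $\Q$, under which $S$ is an exponential $\Q$-martingale driven by a Gaussian martingale of strictly increasing bracket $q^2$; completeness and no-arbitrage then follow from the martingale representation property for such one-dimensional Gaussian martingales (again via the time change to Brownian motion).

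For the sufficient condition, I would simply substitute \eqref{eq:K-equiv} into the Volterra representation \eqref{eq:X_from_M}, $X_t = \int_0^t K(t,s)\,\d M_s$, and use the stochastic Fubini theorem: writing $K(t,s) = K(s,s) - K(s,s)\int_s^t h(u,s)K(u,u)^2\,\d m(u)$, the first piece integrates to $\int_0^t K(s,s)\,\d M_s$ and the second, after exchanging the order of the $\d m(u)$-integration and the $\d M_s$-integration, becomes $-\int_0^t\big(\int_0^u h(u,s)K(s,s)\,\d M_s\big)K(u,u)^2\,\d m(u)$, which is exactly the drift term in \eqref{eq:rep-X} (after renaming $u\leftrightarrow s$). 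One must check that \eqref{eq:K-equiv} with $h\in L^2(\d q^2\times\d q^2)$ forces $q^2$ strictly increasing, which is immediate since $K(s,s)^2 = \d q^2/\d m(s)$ and the left side $K(s,s)$ of \eqref{eq:K-equiv} cannot vanish on an interval without contradicting invertibility of $\K^*$ (cf. Definition \ref{dfn:rigv}(2)).

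The main obstacle I anticipate is the rigorous justification of the "if and only if" at the level of laws — i.e., pinning down the precise Hitsuda/Girsanov representation theorem that says a Gaussian process with given strictly increasing quadratic variation yields a complete, arbitrage-free exponential model exactly when its canonical semimartingale drift has the finite-energy Volterra form \eqref{eq:rep-X}. This is where the subtlety flagged before the proposition (representation "in law", with $M$ only required to have the right law rather than being literally the fundamental martingale of $X$) has to be handled carefully; the stochastic Fubini manipulation for the sufficient condition, by contrast, is routine once the integrability of $h$ is in hand.
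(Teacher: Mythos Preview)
Your proposal is correct and follows essentially the same route as the paper: reduce completeness and no-arbitrage, via the fundamental theorems of asset pricing, to the existence of an equivalent Gaussian martingale law for $X$; identify that martingale as $G_t=\int_0^t K(s,s)\,\d M_s$ with bracket $q^2$; and then invoke Hitsuda's representation theorem (time-changed by the strictly increasing $q^2$) to obtain exactly the Volterra-drift form \eqref{eq:rep-X}. The paper's treatment of the sufficient condition \eqref{eq:K-equiv} is the terse sentence ``combine \eqref{eq:rep-X} with \eqref{eq:X_from_M}''; your explicit stochastic-Fubini computation and your check that \eqref{eq:K-equiv} forces $K(s,s)\ne 0$ (hence $q^2$ strictly increasing) via the invertibility in Definition~\ref{dfn:rigv}(2) fill in details the paper leaves implicit, but do not change the argument.
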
 

\begin{proof}
By the fundamental theorems of asset pricing we have to show that there exists a unique (in law) Gaussian martingale $G$ such that $X$ is equivalent to it. Since $X$ has quadratic variation $q^2$ given by \eqref{eq:qv}, the Gaussian martingale must have the same quadratic variation.  Actually, we may assume that 
\begin{equation}\label{eq:G_from_M}
G_t = \int_0^t K(s,s)\, \d M_s.
\end{equation}

Now we can apply the Hitsuda representation theorem in a similar manner as in \cite{Sottinen-2004} or \cite{Sottinen-Tudor-2006}. 
Recall that by Hitsuda representation theorem (see \cite{Hitsuda-1968}) a Gaussian process $\tilde W$ is equivalent to Brownian motion $W$ if and only if it admits the representation (in law)
$$
\tilde W_t = W_t - \int_0^t \int_0^s \ell(s,u)\, \d W_u \, \d s,
$$
where $\ell$ is any Volterra kernel in $L^2([0,T]^2, \d t\times\d t)$.  Since the Gaussian martingale $G$ with bracket $q^2$ is a time-changed Brownian motion, $G_t = W_{q^2(t)}$, we obtain the representation (in law)
\begin{equation}\label{eq:hitsuda-X}
X_t = G_t - \int_0^t \int_0^s h(s,u)\, \d G_u \, \d q^2(s),
\end{equation}
where
$$
h(s,u) = \ell\left(q^2(s), q^2(u)\right). 
$$
Consequently, $X$ is equivalent to $G$ if and only if it admits (in law) representation \eqref{eq:hitsuda-X} with some Volterra kernel $h$ in $L^2([0,T], \d q^2\times \d q^2)$. The representation \eqref{eq:rep-X} follows by combining \eqref{eq:hitsuda-X} with \eqref{eq:G_from_M}.

Equation \eqref{eq:K-equiv} follows by combining \eqref{eq:rep-X} with \eqref{eq:X_from_M}.
\end{proof}

\section{Prediction}\label{sect:prediction}

Prediction of the asset price or the noise is possible because
\begin{enumerate}
\item
all the filtrations $\F_t^S$, $\F_t^X$ and $\F_t^M$ are the same,
\item
for regular invertible Gaussian Volterra processes we can use the theorem of Gaussian correlations in an explicit manner.
\end{enumerate}

Denote
\begin{eqnarray*}
\hat X_t(u) &=& \E\left[ X_t \,\big|\, \F_u^X\right], \\
\hat R(t,s|u) &=& \Cov\left[X_t,X_s\, \big|\, \F_u^X\right].
\end{eqnarray*}

\begin{theorem}[Prediction]\label{thm:prediction}
Let $X$ be a regular invertible Gaussian Volterra process with fundamental martingale $M$. Then the conditional process 
$X_t(u) = X_t |\F_u^X$, $t\in[u,T],$
is Gaussian with $\F_u^X$-measurable mean
\begin{equation*}\label{eq:cond-mean}
\hat X_t(u) =
X_u - \int_0^u \Psi(t,s|u)\, \d X_s,
\end{equation*}
where
$$
\Psi(t,s|u) = (\K^*)^{-1}\left[K(t,\cdot)-K(u,\cdot)\right](s),
$$
and deterministic covariance
\begin{equation*}\label{eq:conv-cov}
\hat R(t,s|u) =
R(t,s) - \int_0^u K(t,v)K(s,v)\,  m(\d v).
\end{equation*}
\end{theorem}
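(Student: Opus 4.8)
The plan is to reduce conditioning on $\F_u^X$ to conditioning on $\F_u^M$ and then to exploit that a continuous Gaussian martingale has independent increments. By Theorem \ref{thm:rigv} the transformation \eqref{eq:M_from_X} is non-anticipative and so is its inverse \eqref{eq:X_from_M}; hence $\F_u^X=\F_u^M$ for every $u\in[0,T]$, and it suffices to describe the law of $(X_t)_{t\in[u,T]}$ conditionally on $\F_u^M$. Using \eqref{eq:X_from_M} I would split, for $t\ge u$,
$$
X_t=\int_0^u K(t,s)\,\d M_s+\int_u^t K(t,s)\,\d M_s=:A_t+B_t .
$$
Here $A_t$ is $\F_u^M$-measurable, while $(B_t)_{t\in[u,T]}$ is a centered Gaussian process independent of $\F_u^M$, since it is built only from the increments of $M$ after time $u$. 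As $(A,B)$ is jointly Gaussian with $\E[A_tB_s]=0$ (orthogonality of martingale increments), the conditional law of $(X_t)_{t\in[u,T]}$ given $\F_u^X$ is Gaussian, with $\F_u^X$-measurable mean $\hat X_t(u)=A_t=\int_0^uK(t,s)\,\d M_s$ and deterministic covariance equal to that of $(B_t)_{t\in[u,T]}$.

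The conditional covariance then follows immediately from the It\^o isometry for $M$, whose bracket is $m$: for $t,s\ge u$,
$$
\hat R(t,s\,|\,u)=\E[B_tB_s]=\int_u^{t\wedge s}K(t,v)K(s,v)\,\d m(v)=R(t,s)-\int_0^u K(t,v)K(s,v)\,\d m(v),
$$
the last equality being the factorization of $R$ from Definition \ref{dfn:rigv}(1).

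It remains to rewrite the conditional mean $\int_0^uK(t,s)\,\d M_s$ as a Wiener integral against $X$ over $[0,u]$. For this, note that the restriction $(X_r)_{r\in[0,u]}$ is again a regular invertible Gaussian Volterra process, with the same kernel and the same fundamental martingale $(M_r)_{r\in[0,u]}$, so the isometry used to extend the Wiener integral against $X$ transfers it to an integral against $M$ through $\K^*$: for a deterministic integrand $f$ one has $\int_0^uf(s)\,\d X_s=\int_0^u\K^*[f](s)\,\d M_s$, first for step functions and then on the isometric closure. Since $\int_0^uK(u,s)\,\d M_s=X_u$ by \eqref{eq:X_from_M}, splitting off $X_u$ and inverting $\K^*$ on $K(t,\cdot)-K(u,\cdot)$ rewrites the remaining $\d M$-integral as $\int_0^u\Psi(t,s\,|\,u)\,\d X_s$ with $\Psi(t,\cdot\,|\,u)=(\K^*)^{-1}\bigl[K(t,\cdot)-K(u,\cdot)\bigr]$, which gives the formula for $\hat X_t(u)$ in the statement.

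The step I expect to be the main obstacle is this last one. One must verify that $\K^*$ is invertible on $K(t,\cdot)-K(u,\cdot)$, so that $\Psi(t,\cdot\,|\,u)$ is a well-defined element of $L^2([0,u],\d m)$ and the resulting Wiener integral is genuinely $\F_u^X$-measurable; this rests on Definition \ref{dfn:rigv}(2) — which grants solvability of $\K^*[f]=\1_{[0,t)}$, and hence, by linearity and the Volterra structure of $K$, of the equations needed here — together with the completeness of the class of admissible integrands under the transfer isometry. The remaining ingredients — joint Gaussianity of $(A,B)$, independence of $B$ from $\F_u^M$, and $\F_u^X=\F_u^M$ — are routine once the Volterra representation \eqref{eq:X_from_M} is available.
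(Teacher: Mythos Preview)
Your proposal is correct and follows essentially the same route as the paper: both use the equality $\F_u^X=\F_u^M$, split $X_t$ via the Volterra representation into an $\F_u^M$-measurable part and a part built from post-$u$ increments of $M$, compute the conditional covariance by the It\^o isometry, and then convert the conditional mean from a $\d M$-integral to a $\d X$-integral via the transfer operator $\K^*$. You are in fact slightly more careful than the paper in flagging the invertibility of $\K^*$ on $K(t,\cdot)-K(u,\cdot)$; the paper simply appeals to ``the isometric definition of Wiener integration with respect to $X$'' at that step.
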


\begin{proof}
Consider first the conditional mean.  By Theorem \ref{thm:rigv}
\begin{eqnarray*}
\hat X_t(u) &=& \E\left[\int_0^t K(t,s)\, \d M_s \,\Big|\, \F_u^M\right] \\
&=&
\int_0^u K(t,s)\, \d M_s \\
&=&
\int_0^t K(t,s)\, \d M_s - \int_0^u \left[K(t,s)-K(u,s)\right]\, \d M_s \\
&=&
X_t - \int_0^u  \left[K(t,s)-K(u,s)\right]\, \d M_s.
\end{eqnarray*}
The conditional expectation formula follows from this by using the isometric definition of Wiener integration with respect to $X$.

Consider then the conditional variance. By Theorem \ref{thm:rigv} and calculations above
\begin{eqnarray*}
\hat R(t,s|u) &=&
\E\left[\int_u^t K(t,v)\, \d M_v \int_u^s K(s,v)\, \d M_v \,\Big|\, \F_u^M \right] \\
&=&
\E\left[\int_u^t K(t,v)\, \d M_v \int_u^s K(s,v)\, \d M_v \right] \\
&=&
\int_u^{t\wedge s} K(t,v) K(s,v)\, m(\d v) \\
&=&
\int_0^{t \wedge s} K(t,v)K(s,v)\, m(\d v) - \int_0^u K(t,v)K(s,v)\, m(\d v) \\
&=&
R(t,s) - \int_0^u K(t,v)K(s,v)\, m(\d v).
\end{eqnarray*}
\end{proof}

Denote
\begin{eqnarray*}
\hat\rho(t|u) &=& \sqrt{\hat R(t,t|u)}, \\
\beta(u,t)
&=&
\mu(u,t) - \frac12 q^2(u,t)
\end{eqnarray*}
Then
$$
S_t = S_u\e^{\beta(u,t) + (X_t-X_u)}
$$
and
$$
\Var[X_t-X_u\,\big|\, \F_u] = \hat\rho^2(t|u).
$$

The following corollary is the key result that allows us to calculate the conditional-mean hedging strategies in Section \ref{sect:cmh}.

\begin{corollary}[Prediction]\label{cor:prediction}
Let $0\le u \le t\le T$.
Let $f\colon[0,T]\times\R\to\R$ be such that $f(t,S_t)$ is integrable.
Let $\phi$ be the standard Gaussian density function. Then
$$
\E\left[f(t,S_t)\,\Big|\, \F_u^S\right]
=
\int_{-\infty}^\infty 
f\left(t,S_u\e^{\beta(u,t)-\int_0^u \Psi(t,s|u)\d X_s+ \hat \rho(t|u)z}\right)\phi(z)\d z.
$$
\end{corollary}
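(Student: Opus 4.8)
The plan is to reduce the assertion to the one‑dimensional Gaussian prediction formula of Theorem \ref{thm:prediction}, combined with the equality of the filtrations $\F^S$, $\F^X$, $\F^M$ and the substitution (``freezing'') property of conditional expectations. First I would recall from Section \ref{sect:arbitrage-completeness-qv} that
$$
S_t = S_u\,\e^{\beta(u,t) + (X_t - X_u)},
$$
that $S_u$ is $\F_u^X$‑measurable, that $\beta(u,t)$ is deterministic, and that $\F_u^S = \F_u^X$. Hence $\E[f(t,S_t)\mid\F_u^S] = \E[\,f(t,\,S_u\e^{\beta(u,t)+(X_t-X_u)})\mid\F_u^X\,]$, and everything comes down to describing the conditional law of $X_t - X_u$ given $\F_u^X$.

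Next I would invoke Theorem \ref{thm:prediction}: conditionally on $\F_u^X$, the variable $X_t$ is Gaussian with $\F_u^X$‑measurable mean $\hat X_t(u) = X_u - \int_0^u \Psi(t,s\mid u)\,\d X_s$ and deterministic variance $\hat R(t,t\mid u) = \hat\rho^2(t\mid u)$. Since $X$ is a Gaussian process, the prediction error $X_t - \hat X_t(u)$ is orthogonal in $L^2$ to the closed linear span of $\{X_s : 0\le s\le u\}$, and hence — by joint Gaussianity — independent of that span, which generates $\F_u^X$. Therefore there is a standard Gaussian $Z$, independent of $\F_u^X$, such that
$$
X_t - X_u = -\int_0^u \Psi(t,s\mid u)\,\d X_s + \hat\rho(t\mid u)\,Z,
$$
and the right‑hand side realizes a regular conditional distribution of $X_t - X_u$ given $\F_u^X$. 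I expect this passage from ``$\F_u^X$‑measurable conditional mean plus deterministic conditional variance'' to ``$\F_u^X$‑measurable drift plus an $\F_u^X$‑independent $N(0,1)$'' to be the only genuinely delicate point, and the one I would write out most carefully, since it is exactly where Gaussianity is indispensable.

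Finally, setting $Y = S_u\,\e^{\beta(u,t) - \int_0^u \Psi(t,s\mid u)\,\d X_s}$, which is $\F_u^X$‑measurable, the previous step gives
$$
\E\left[f(t,S_t)\,\big|\,\F_u^S\right] = \E\left[\,f\!\left(t,\,Y\,\e^{\hat\rho(t\mid u)\,Z}\right)\,\big|\,\F_u^X\,\right].
$$
Because $Z$ is independent of $\F_u^X$ while $Y$ is $\F_u^X$‑measurable, the substitution lemma for conditional expectations (applicable thanks to the assumed integrability of $f(t,S_t)$) yields
$$
\E\left[f(t,S_t)\,\big|\,\F_u^S\right] = \int_{-\infty}^\infty f\!\left(t,\,Y\,\e^{\hat\rho(t\mid u)\,z}\right)\phi(z)\,\d z,
$$
which, upon expanding $Y$, is precisely the claimed identity. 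The remaining work is routine bookkeeping: checking measurability of the iterated Wiener integral $\int_0^u \Psi(t,s\mid u)\,\d X_s$ with respect to $\F_u^X$ (already implicit in Theorem \ref{thm:prediction}) and verifying the hypotheses of the substitution lemma.
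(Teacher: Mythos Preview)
Your proposal is correct and follows essentially the same route as the paper: write $S_t = S_u\e^{\beta(u,t)+(X_t-X_u)}$, use the equality $\F_u^S=\F_u^X$, and invoke Theorem~\ref{thm:prediction} to identify the conditional law of $X_t-X_u$ as Gaussian with $\F_u^X$-measurable mean $-\int_0^u\Psi(t,s|u)\,\d X_s$ and variance $\hat\rho^2(t|u)$, then integrate against $\phi$. If anything, your version is more explicit than the paper's, which simply records the chain of equalities and labels them ``straightforward calculations''; your care in isolating the independence of $Z$ from $\F_u^X$ and the substitution lemma is exactly the content the paper leaves implicit.
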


\begin{proof}
Given Theorem \ref{thm:prediction}, the equality of filtrations and the F\"ollmer--It\^o formula, the claim follows from straightforward calculations:
\begin{eqnarray*}
\E\left[f(t,S_t)\,\Big|\, \F_u^S\right]
&=&
\E\left[f\left(t,S_u\e^{\beta(u,t) + \left(X_t-X_u\right)}\right)\,\Big|\, \F_u^X\right] \\
&=&
\int_{-\infty}^\infty 
f\left(t,S_u\e^{\beta(u,t) + \left(\hat\rho(t|u)z+\hat X_t(u)\right)-X_u} \right)\phi(z)\d z \\
&=&
\int_{-\infty}^\infty 
f\left(t,S_u\e^{\beta(u,t) + \hat\rho(t|u)z+\left(\hat X_t(u)-X_u\right)} \right)\phi(z)\d z \\
&=&
\int_{-\infty}^\infty 
f\left(t,S_u\e^{\beta(u,t) + \hat\rho(t|u)z-\int_0^u \Psi(t,s|u)\d X_s} \right)\phi(z)\d z,
\end{eqnarray*}
proving the claim.
\end{proof}

\section{Conditional-Mean Hedging}\label{sect:cmh}

We are interested in the pricing and hedging of European vanilla options $f(S_T)$ of the single discounted underlying asset $S=(S_t)_{t\in[0,T]}$, where $T>0$ is a fixed time of maturity of the option.

We assume that the trading only takes place at fixed preset time points $0=t_0<t_1<\cdots<t_N<T$. We denote by $\pi^N$ the discrete trading strategy
$$
\pi^N_t = \pi^N_0\1_{\{0\}}(t) + \sum_{i=1}^N \pi_{t_{i-1}}^N\1_{(t_{i-1},t_{i}]}(t).
$$
The value of the strategy $\pi^N$ is given by
\begin{equation}\label{eq:kappa}
V^{\pi^N,k}_t = V^{\pi^N,k}_0 + \int_0^t \pi^N_u \, \d S_u -
\int_0^t kS_{u} |\d\pi^N_{u}|,
\end{equation}
where $k\in[0,1)$ is the proportional transaction cost.

Under transaction costs perfect hedging is not possible.  In this case, it is natural to try to hedge on average in the sense of the following definition:

\begin{definition}[Conditional-Mean Hedge]\label{dfn:cm-hedge}
Let $f(S_T)$ be a European vanilla type option with convex or concave payoff function $f$.  Let $\pi$ be its Markovian replicating strategy: $\pi_t=g(t,S_t)$.  We call the discrete-time strategy $\pi^N$ a \emph{conditional-mean hedge}, if for all trading times $t_i$,
\begin{equation}\label{eq:cm-hedge}
\E\left[ V^{\pi^N, k}_{t_{i+1}}\,|\, \F_{t_i}\right] = \E\left[ V^{\pi}_{t_{i+1}}\,|\, \F_{t_i}\right].
\end{equation}
Here $\F_{t_i}$ is the information generated by the asset price process $S$ up to time $t_i$. 
\end{definition}

\begin{remark}[Conditional-Mean Hedge as Tracking Condition]
Criterion \eqref{eq:cm-hedge} is actually a tracking requirement.  We do not only require that the conditional means agree on the last trading time before the maturity, but also on all trading times.  In this sense the criterion has an ``American'' flavor in it.  From a purely ``European'' hedging point of view, one can simply remove all but the first and the last trading times.
\end{remark}

\begin{remark}[Arbitrage and Uniqueness of Conditional-Mean Hedge]
Note that the  conditional-mean hedging strategy $\pi^N$ depends on the continuous-time hedging strategy $\pi$. 
Since there may be strong arbitrage in the pricing model (zero can be perfectly replicated with negative initial wealth), the replicating strategy $\pi$ may not be unique. However, the strong arbitrage strategies are very complicated.  Indeed, it follows directly from the F\"ollmer--It\^o change-of-variables formula that in the class of Markovian strategies $\pi_t = g(t,S_t)$, the delta-hedge coming from the Black--Scholes type backward partial differential equation is the unique replicating strategy for the claim $f(S_T)$.  
\end{remark}

\begin{remark}[No Martingale Measures]
We stress that the expectation in \eqref{eq:cm-hedge} is with respect to the true probability measure; not under any equivalent martingale measure.  Indeed, equivalent martingale measures may not even exist.
\end{remark}

To find the solution to \eqref{eq:cm-hedge} one must be able to calculate the conditional expectations involved. 

Let $\pi$ be the continuous-time Markovian hedging strategy of the claim $f(S_T)$ and let $V^\pi$ be its value process. Denote
\begin{eqnarray*}
\Delta\hat X_{t_{i+1}}(t_i) 
&=& \hat X_{t_{i+1}}(t_i) - X_{t_i}, \\
&=& \E\left[ X_{t_{i+1}} |\F_{t_i}\right] -  X_{t_i}, \\ 
\Delta\hat S_{t_{i+1}}(t_i) 
&=&
\hat S_{t_{i+1}}(t_i) - S_{t_i} \\
&=&
\E\left[S_{t_{i+1}}|\F_{t_i}\right] - S_{t_i}, \\
\Delta\hat V^\pi_{t_{i+1}}(t_i) 
&=&
\hat V^\pi_{t_{i+1}}(t_i) - V^\pi_{t_i} \\
&=&
\E\left[V^\pi_{t_{i+1}}|\F_{t_i}\right] - V^\pi_{t_i}, \\
\Delta\hat V^{\pi^N,k}_{t_{i+1}}(t_i) 
&=&
\hat V^{\pi^N,k}_{t_{i+1}}(t_i) - V^\pi_{t_i} \\
&=&
\E\left[V^{\pi^N,k}_{t_{i+1}}|\F_{t_i}\right] - V^{\pi^N,k}_{t_i}. 
\end{eqnarray*}

Denote
$$
\gamma(s,t,T) = \beta(s,t) - \frac12 q^2(t,T).
$$

Lemma \ref{lmm:c-deltas} below states that all these conditional gains listed above can be calculated explicitly.  

\begin{lemma}[Conditional Gains]\label{lmm:c-deltas}
\begin{eqnarray*}
\Delta\hat X_{t_{i+1}}(t_i)
&=&
-\int_0^u \Psi(t,s|u)\, \d X_u, \\
\Delta\hat S_{t_{i+1}}(t_i) 
&=&
S_{t_i}\left(\e^{\beta(t_i,t_{i+1})+\frac12\hat\rho^2(t_{i+1}|t_i) + \Delta\hat X_{t_{i+1}}(t_i)}
-1\right), \\
\Delta\hat V^\pi_{t_{i+1}}(t_i)
&=&
\int_{-\infty}^\infty 
\left[\int_{-\infty}^\infty 
f\left(S_{t_{i}}\e^{\gamma(t_i,t_{i+1},T) + \hat\rho(t_{i+1}|t_i)y +q(t_{i+1},T)z}\right)
\, \phi(y)\d y \right.\\
& & - \left.
f\left(S_{t_i}\e^{-\frac12 q^2(t_i,T) + q(t_i,T)z}\right)\right] 
\phi(z)\d z, \\
\Delta\hat V^{\pi^N,k}_{t_{i+1}}(t_i)
&=&
\pi_{t_i}^N\Delta\hat S_{t_{i+1}}(t_i) - k S_{t_i}|\Delta \pi^N_{t_i}|.
\end{eqnarray*}
\end{lemma}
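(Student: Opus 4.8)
The four identities are of quite different natures, so I would treat them in turn, from easiest to hardest. The first identity for $\Delta\hat X_{t_{i+1}}(t_i)$ is an immediate consequence of Theorem \ref{thm:prediction}: the conditional mean formula there gives $\hat X_t(u) = X_u - \int_0^u \Psi(t,s|u)\,\d X_s$, so subtracting $X_{t_i}$ and specialising $(t,u)=(t_{i+1},t_i)$ yields the claim directly. (I note that the statement as typeset carries the variables $t,u$ rather than $t_{i+1},t_i$ and an integrator $\d X_u$ rather than $\d X_s$; these are evidently typos to be corrected, but the content is exactly the prediction formula.)

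\textbf{The $S$ and $V^\pi$ identities.} For $\Delta\hat S_{t_{i+1}}(t_i)$ I would write $S_{t_{i+1}} = S_{t_i}\e^{\beta(t_i,t_{i+1}) + (X_{t_{i+1}}-X_{t_i})}$ from the F\"ollmer--It\^o representation of the solution to \eqref{eq:risky-dynamics}, then take the conditional expectation given $\F_{t_i}$. By Theorem \ref{thm:prediction} the conditional law of $X_{t_{i+1}}-X_{t_i}$ given $\F_{t_i}$ is Gaussian with mean $\Delta\hat X_{t_{i+1}}(t_i)$ and variance $\hat\rho^2(t_{i+1}|t_i)$, so the Gaussian Laplace-transform identity $\E[\e^{N}] = \e^{\E N + \frac12 \Var N}$ gives the factor $\e^{\beta(t_i,t_{i+1}) + \frac12\hat\rho^2(t_{i+1}|t_i) + \Delta\hat X_{t_{i+1}}(t_i)}$; subtracting $S_{t_i}$ finishes it. For $\Delta\hat V^\pi_{t_{i+1}}(t_i)$ I would start from Proposition \ref{pro:rh}, which expresses $V^\pi_t = v(t,S_t)$ as the inner Gaussian integral over $z$ of $f\!\left(S_t\e^{-\frac12 q^2(t,T)+q(t,T)z}\right)$. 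Applying Corollary \ref{cor:prediction} with this $f(t,\cdot)$ to compute $\E[V^\pi_{t_{i+1}}|\F_{t_i}] = \E[v(t_{i+1},S_{t_{i+1}})|\F_{t_i}]$ introduces a second Gaussian integration variable $y$ with scale $\hat\rho(t_{i+1}|t_i)$, and one checks that the exponent collects to $\gamma(t_i,t_{i+1},T) + \hat\rho(t_{i+1}|t_i)y + q(t_{i+1},T)z$ after using $\beta(t_i,t_{i+1}) - \tfrac12 q^2(t_{i+1},T) = \gamma(t_i,t_{i+1},T)$ and absorbing the $\Psi$-integral into $\Delta\hat X$ exactly as in Corollary \ref{cor:prediction}; subtracting $V^\pi_{t_i}$, which is the $z$-integral of $f(S_{t_i}\e^{-\frac12 q^2(t_i,T)+q(t_i,T)z})$, gives the bracketed difference. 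The one point requiring care is the bookkeeping of the two scales $\hat\rho(t_{i+1}|t_i)$ and $q(t_{i+1},T)$: the former comes from the conditional variance of $X_{t_{i+1}}-X_{t_i}$ given $\F_{t_i}$ (prediction), the latter from the frictionless pricing formula at the future time $t_{i+1}$ (robust hedging), and they must be kept as genuinely separate independent Gaussian directions rather than combined.

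\textbf{The $V^{\pi^N,k}$ identity.} For the last line I would use the definition \eqref{eq:kappa} of the value of the discrete strategy. Over the interval $(t_i,t_{i+1}]$ the position $\pi^N$ is constant, equal to $\pi^N_{t_i}$, and the only jump of $\pi^N$ strictly inside the evaluation is the rebalancing at $t_i$ itself; hence $V^{\pi^N,k}_{t_{i+1}} - V^{\pi^N,k}_{t_i} = \pi^N_{t_i}(S_{t_{i+1}} - S_{t_i}) - kS_{t_i}|\Delta\pi^N_{t_i}|$, where the transaction-cost term is $\F_{t_i}$-measurable. Taking $\E[\,\cdot\,|\F_{t_i}]$ leaves the cost term untouched and replaces $S_{t_{i+1}}-S_{t_i}$ by $\Delta\hat S_{t_{i+1}}(t_i)$, giving the stated formula. (Again the typeset placement of $|\Delta\pi^N_{t_i}|$ at the left edge of the rebalancing interval matches the convention implicit in \eqref{eq:kappa}; one must be mindful of whether the cost at the terminal rebalancing $t_N$ is counted, but for $i<N$ there is no ambiguity.)

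\textbf{Main obstacle.} None of this is deep once Theorem \ref{thm:prediction}, Corollary \ref{cor:prediction} and Proposition \ref{pro:rh} are in hand; the real work is purely in the exponent arithmetic for $\Delta\hat V^\pi$, keeping the prediction scale $\hat\rho(t_{i+1}|t_i)$ and the pricing scale $q(t_{i+1},T)$ distinct and correctly matching the constant term to $\gamma(t_i,t_{i+1},T)$. I expect that the chief risk of error lies there, together with the consistent treatment of the $\int_0^{t_i}\Psi(t_{i+1},s|t_i)\,\d X_s$ term, which must be carried through both the $S$-expectation and the nested $V^\pi$-expectation in the same way Corollary \ref{cor:prediction} already does.
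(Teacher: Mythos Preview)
Your proposal is correct and follows essentially the same route as the paper's own proof: Theorem~\ref{thm:prediction} for $\Delta\hat X$, the conditional Gaussian law of $X_{t_{i+1}}-X_{t_i}$ for $\Delta\hat S$ (you invoke the Laplace-transform identity $\E[\e^N]=\e^{\E N+\frac12\Var N}$ where the paper writes out the integral $\int\e^{\hat\rho z}\phi(z)\,\d z$ explicitly via Corollary~\ref{cor:prediction}, which is the same computation), Proposition~\ref{pro:rh} combined with Corollary~\ref{cor:prediction} and Fubini for $\Delta\hat V^\pi$, and the definition~\eqref{eq:kappa} directly for $\Delta\hat V^{\pi^N,k}$. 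Your cautionary remarks about the exponent bookkeeping and the handling of the $\Psi$-integral are exactly the places where the paper's own derivation is terse, so your emphasis is well placed.
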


\begin{proof}
The formula for $\Delta\hat X_{t_{i+1}}(t_i)$ is given by Theorem \ref{thm:prediction}.

Consider $\Delta \hat S_{t_{i+1}}(t_i)$. By Corollary \ref{cor:prediction},
\begin{eqnarray*}
\hat S_{t_{i+1}}(t_i)
&=&
\int_{-\infty}^\infty S_{t_i} \e^{\beta(t_i,t_{i+1}) + \Delta\hat X_{t_{i+1}}(t_i)+ \hat\rho(t_{i+1}|t_i)z}\, \phi(z)\d z \\
&=&
S_{t_i}\e^{\beta(t_i,t_{i+1})+ \Delta\hat X_{t_{i+1}}(t_i)}\int_{-\infty}^\infty \e^{\hat\rho(t_{i+1}|t_i)z}\, \phi(z)\d z \\
&=&
S_{t_i}\e^{\beta(t_i,t_{i+1})+\frac12\hat\rho^2(t_{i+1}|t_i) + \Delta\hat X_{t_{i+1}}(t_i)}.
\end{eqnarray*}
Consequently,
$$
\Delta\hat S_{t_{i+1}}(t_i) 
=
S_{t_i}\left(\e^{\beta(t_i,t_{i+1})+\frac12\hat\rho^2(t_{i+1}|t_i) + \Delta\hat X_{t_{i+1}}(t_i)}
-1\right).
$$

Consider then $\Delta\hat V^\pi_{t_{i+1}}(t_i)$. By Proposition \ref{pro:rh} and Fubini theorem,
$$
\hat V_{t_{i+1}}(t_i)
=
\int_{-\infty}^\infty 
\E\left[f\left(S_{t_{i+1}}\e^{-\frac12 q^2(t_{i+1},T)+q(t_{i+1},T)z}\right)
\,\Big|\, \F_{t_i}\right]\phi(z)\d z.
$$
Now,
\begin{eqnarray*}
\lefteqn{\E\left[f\left(S_{t_{i+1}}\e^{-\frac12 q^2(t_{i+1},T)+q(t_{i+1},T)z}\right)
\,\Big|\, \F_{t_i}\right] } \\
&=& 
\E\left[f\left(S_{t_{i}}\e^{\beta(t_i,t_{i+1})+\frac12 q^2(t_i,T) + (X_{t_{i+1}}-X_{t_i})+q(t_{i+1},T)z}\right)
\,\Big|\, \F_{t_i}\right] \\
&=&
\int_{-\infty}^\infty 
f\left(S_{t_{i}}\e^{\beta(t_i,t_{i+1}) - \frac12 q^2(t_{i+1},T) + \hat\rho(t_{i+1}|t_i)y +q(t_{i+1},T)z}\right)
\, \phi(y)\d y \\
&=&
\int_{-\infty}^\infty 
f\left(S_{t_{i}}\e^{\beta(t_i,t_{i+1}) - \frac12 q^2(t_{i+1},T)  + \hat\rho(t_{i+1}|t_i)y +q(t_{i+1},T)z}\right)
\, \phi(y)\d y.
\end{eqnarray*}
Since
$$
V^\pi_{t_i} = \int_{-\infty}^\infty f\left(S_{t_i}\e^{-\frac12 q^2(t_i,T) + q(t_i,T)z}\right) \phi(z)\d z,
$$
we obtain 
\begin{eqnarray*}
\lefteqn{\Delta\hat V^\pi_{t_{i+1}}(t_i)} \\
&=&
\int_{-\infty}^\infty \int_{-\infty}^\infty 
f\left(S_{t_{i}}\e^{\gamma(t_i,t_{i+1},T)  + \hat\rho(t_{i+1}|t_i)y +q(t_{i+1},T)z}\right)
\, \phi(y)\d y\phi(z)\d z \\
& &
-
\int_{-\infty}^\infty f\left(S_{t_i}\e^{-\frac12 q^2(t_i,T) + q(t_i,T)z}\right) \phi(z)\d z \\
&=& 
\int_{-\infty}^\infty 
\left[\int_{-\infty}^\infty 
f\left(S_{t_{i}}\e^{\gamma(t_i,t_{i+1},T) + \hat\rho(t_{i+1}|t_i)y +q(t_{i+1},T)z}\right)
\, \phi(y)\d y \right.\\
& & - \left.
f\left(S_{t_i}\e^{-\frac12 q^2(t_i,T) + q(t_i,T)z}\right)\right] 
\phi(z)\, \d z.
\end{eqnarray*}

Finally, we calculate
\begin{eqnarray*}
\hat V^{\pi^N,k}_{t_{i+1}}(t_i)
&=&
\E\left[ V^{\pi^N,k}_{t_{i+1}}\, \big| \, \F_{t_i}\right] \\
&=&
V^{\pi^N,k}_{t_i} + 
\E\left[\int_{t_{i}}^{t_{i+1}} \pi^N_u \, \d S_u - \int_{t_i}^{t_{i+1}} k S_{u}|\d\pi^N_{u}|\,\Big|\, \F_{t_i}\right] \\
&=&
V^{\pi^N,k}_{t_i} + \pi_{t_i}^N\left(\E\left[S_{t_{i+1}}\big| \F_{t_i}\right] - S_{t_i}\right) - k S_{t_i}|\Delta \pi^N_{t_i}| \\
&=&
V^{\pi^N,k}_{t_i} + \pi_{t_i}^N\Delta\hat S_{t_{i+1}}(t_i) - k S_{t_i}|\Delta \pi^N_{t_i}|.
\end{eqnarray*}
The formula for $\Delta\hat V^{\pi^N,k}_{t_{i+1}}(t_i)$ follows from this.
\end{proof}

Now we are ready to state and prove our main result.  We note that, in principle, our result is general: it is true in any pricing model where the option $f(S_T)$ can be replicated. In practice, our result is specific to the regular invertible Gaussian Volterra noise pricing model via Lemma \ref{lmm:c-deltas}.

\begin{theorem}[Conditional-Mean Hedging Strategy]\label{thm:dhedging}
The conditional mean hedge of the European vanilla type option with convex or concave positive payoff function $f$ with proportional transaction costs $k$ is given by the recursive equation
\begin{equation}\label{eq:dhedging}
\pi^N_{t_i}
=
\frac{\Delta\hat V^\pi_{t_{i+1}}(t_i) + (V^\pi_{t_i}-V^{\pi^N,k}_{t_i}) + k S_{t_i}|\Delta\pi^N_{t_i}|}{\Delta\hat S_{t_{i+1}}(t_i)},
\end{equation}
where $V^{\pi^N,k}_{t_i}$ is determined by \eqref{eq:kappa}.
\end{theorem}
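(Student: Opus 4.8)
The plan is to read the claimed recursion directly off the defining identity \eqref{eq:cm-hedge} of a conditional-mean hedge, once the two conditional expectations appearing in it have been made explicit by Lemma \ref{lmm:c-deltas}. Fix a trading time $t_i$ and assume inductively that $\pi^N_{t_0},\dots,\pi^N_{t_{i-1}}$ have already been chosen, so that by \eqref{eq:kappa} the wealth $V^{\pi^N,k}_{t_i}$ is a known $\F_{t_i}$-measurable quantity (with the natural initialization $V^{\pi^N,k}_0=V^\pi_0$). Writing each side of \eqref{eq:cm-hedge} as ``value at $t_i$ plus conditional gain'',
\[
\E\!\left[V^{\pi^N,k}_{t_{i+1}}\,\big|\,\F_{t_i}\right] = V^{\pi^N,k}_{t_i} + \Delta\hat V^{\pi^N,k}_{t_{i+1}}(t_i),\qquad \E\!\left[V^{\pi}_{t_{i+1}}\,\big|\,\F_{t_i}\right] = V^{\pi}_{t_i} + \Delta\hat V^{\pi}_{t_{i+1}}(t_i),
\]
the hedging requirement \eqref{eq:cm-hedge} becomes $V^{\pi^N,k}_{t_i} + \Delta\hat V^{\pi^N,k}_{t_{i+1}}(t_i) = V^{\pi}_{t_i} + \Delta\hat V^{\pi}_{t_{i+1}}(t_i)$.

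Next I would substitute the last formula of Lemma \ref{lmm:c-deltas}, namely $\Delta\hat V^{\pi^N,k}_{t_{i+1}}(t_i) = \pi^N_{t_i}\,\Delta\hat S_{t_{i+1}}(t_i) - kS_{t_i}\,|\Delta\pi^N_{t_i}|$, into this equality and solve the resulting scalar equation for $\pi^N_{t_i}$; collecting the remaining terms and dividing by $\Delta\hat S_{t_{i+1}}(t_i)$ gives exactly \eqref{eq:dhedging}. The quantities $\Delta\hat V^{\pi}_{t_{i+1}}(t_i)$ and $\Delta\hat S_{t_{i+1}}(t_i)$ appearing there are the explicit expressions supplied by Lemma \ref{lmm:c-deltas} (they are well defined precisely because the convexity/concavity and positivity hypotheses on $f$, via Proposition \ref{pro:rh}, ensure that $\pi$ is the Markovian replicating strategy and that the Gaussian integrals converge), while $V^{\pi}_{t_i}$ is the frictionless price from Proposition \ref{pro:rh}; all three are $\F_{t_i}$-measurable, so the recursion is consistent step by step.

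The one genuine subtlety — and the point I expect to need the most care — is that \eqref{eq:dhedging} is an \emph{implicit} equation for $\pi^N_{t_i}$, since $|\Delta\pi^N_{t_i}|=|\pi^N_{t_i}-\pi^N_{t_{i-1}}|$ occurs on its right-hand side. After the substitution above, the equation for $x=\pi^N_{t_i}$ takes the form $b\,x - kS_{t_i}\,|x-a| = c$, with $a=\pi^N_{t_{i-1}}$, $b=\Delta\hat S_{t_{i+1}}(t_i)$ and $c$ an $\F_{t_i}$-measurable constant. Its left-hand side is continuous and piecewise affine in $x$, with slope $b-kS_{t_i}$ for $x>a$ and slope $b+kS_{t_i}$ for $x<a$; hence it is strictly monotone, and the equation has a unique solution, exactly when $|\Delta\hat S_{t_{i+1}}(t_i)|>kS_{t_i}$. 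One therefore either imposes this nondegeneracy as a standing assumption or checks it in the model at hand (it fails only in degenerate cases such as a driftless martingale model, where $\Delta\hat S_{t_{i+1}}(t_i)=0$ and \eqref{eq:cm-hedge} must be analysed separately). Under this condition \eqref{eq:dhedging} determines $\pi^N_{t_i}$ uniquely from the data available at $t_i$, which completes the proof.
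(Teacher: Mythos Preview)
Your argument is correct and follows essentially the same route as the paper: expand both sides of \eqref{eq:cm-hedge} via Lemma~\ref{lmm:c-deltas}, equate, and solve for $\pi^N_{t_i}$. You additionally analyze the solvability of the resulting implicit equation (the condition $|\Delta\hat S_{t_{i+1}}(t_i)|>kS_{t_i}$), a point the paper itself does not address.
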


\begin{proof}
Let us first consider the left hand side of \eqref{eq:cm-hedge}. We have
\begin{eqnarray*}
\E\left[V^{\pi^N,k}_{t_{i+1}}\,\big|\, \F_{t_i}\right] 
&=&
\E\left[V^{\pi^N,k}_{t_{i}} + \int_{t_i}^{t_{i+1}}\pi^N_u\, \d S_u
-k \int_{t_i}^{t_{i+1}} S_u |\d\pi_u^N|\,\Big|\, \F_{t_i}\right] \\ 
&=&
V^{\pi^N,k}_{t_i}
+ \pi^N_{t_i} \E\left[ S_{t_{i+1}}(t_i) - S_{t_i}\,\big|\, \F_{t_i}\right] - kS_{t_i}|\Delta\pi^N{t_i}| \\
&=&
V^{\pi^N,k}_{t_i}
+ \pi^N_{t_i} \Delta\hat S_{t_{i+1}}(t_i) - kS_{t_i}|\Delta\pi^N_{t_i}|.
\end{eqnarray*}
For the right-hand-side of \eqref{eq:cm-hedge}, we simply write 
\begin{eqnarray*}
\E\left[V^{\pi}_{t_{i+1}}\,\big|\, \F_{t_i}\right]
&=&
\Delta\hat V^\pi_{t_{i+1}}(t_i) + V^\pi_{t_i}.
\end{eqnarray*}
Equating the sides we obtain \eqref{eq:dhedging} after a little bit of simple algebra.
\end{proof}

\begin{remark}[Interpretation]
Taking the expected gains $\Delta\hat S_{t_{i+1}}(t_i)$ to be the num\'eraire, one recognizes three parts in the hedging formula \eqref{eq:dhedging}.   First, one invests on the expected gains in the time-value of the option. This ``conditional-mean delta-hedging'' is intuitively the most obvious part.  Indeed, a na\"ive approach to conditional-mean hedging would only give this part, forgetting to correct for the tracking-errors already made, which is the second part in \eqref{eq:dhedging}.  The third part in \eqref{eq:dhedging} is obviously due to the transaction costs. 
\end{remark}

\begin{remark}[Initial Position]
Note that the equation \eqref{eq:dhedging} for the strategy of the conditional-mean hedging is recursive: in addition to the filtration $\F_{t_i}$, the position $\pi^N_{t_{i-1}}$ is needed to determine the position $\pi^N_{t_i}$. Consequently, to determine the conditional-meand hedging strategy by using \eqref{eq:dhedging}, the initial position $\pi^N_0$ must be fixed.  The initial position is, however, not uniquely defined.  Indeed, let $\beta^N_0$ be the position in the riskless asset. Then the conditional-mean criterion \eqref{eq:cm-hedge} only requires that
$$
\beta^N_0 + \pi^N_0\E[S_{t_1}] -kS_0|\pi^N_0|
=
\E[V^\pi_{t_1}].
$$
There are of course infinite number of pairs $(\beta^N_0,\pi^N_0)$ solving this equation.  A natural way to fix the initial position $(\beta^N_0,\pi^N_0)$ for the investor interested in conditional-mean hedging would be the one with minimal cost.  If short-selling is allowed, the investor is then faced with the minimization problem
$$
\min_{\pi^N_0 \in \R} w(\pi^N_0),
$$
where the initial wealth $w$ is the piecewise linear function
\begin{eqnarray*}
w(\pi^N_0)
&=&
\beta^N_0 + \pi^N_0 S_0 \\
&=&
\left\{\begin{array}{rl}
\E[V^\pi_{t_1}] - \left(\Delta \hat S_{t_1}(0)-kS_0\right)\pi^N_0, 
& \mbox{ if }\quad \pi^N_0\ge 0,\\
\E[V^\pi_{t_1}] - \left(\Delta \hat S_{t_1}(0)+kS_0\right)\pi^N_0,
& \mbox{ if }\quad \pi^N_0 < 0.
\end{array}\right. 
\end{eqnarray*}
Clearly, the minimal solution $\pi^N_0$ is independent of $\E[V^\pi_{t_1}]$, and, consequently, of the option to be replicated.   Also, the minimization problem is bounded if and only if 
\begin{eqnarray*}
k \ge \left|\frac{\Delta\hat S_{t_1}(0)}{S_0}\right|,
\end{eqnarray*}
i.e. the proportional transaction costs are bigger than the expected return on $[0,t_1]$ of the stock.  In this case, the minimal cost conditional mean-hedging strategy starts by putting all the wealth in the riskless asset.
\end{remark}

\bibliographystyle{siam}
\bibliography{pipliateekki}
\end{document}